\theoremstyle{plain}
\newtheorem{theorem}{Theorem}[section]
\theoremstyle{definition}
\numberwithin{equation}{section}
\title{Port--Hamiltonian Diffusion Models:\\
A Control-Theoretic Perspective on Generative Modeling}
\author{Majid Darehmiraki\\\small{Department of Mathematics and Statistics,}\\\small{ Behbahan Khatam Alanbia University of Technology, Khouzestan, Iran }}
\date{}
\begin{document}
\maketitle

\begin{abstract}
Diffusion models have recently achieved remarkable success in generative modeling, yet they are commonly formulated as black-box stochastic systems with limited interpretability and few structural guarantees. In this paper, we establish a control-theoretic foundation for diffusion models by embedding them within the port--Hamiltonian (PH) systems framework. We show that the score function can be interpreted as the gradient of a learnable Hamiltonian energy, allowing both the forward and reverse diffusion processes to be formulated as structured PH dynamics. The reverse-time generative process is further interpreted as a feedback-controlled PH system, where dissipation plays a fundamental role in stabilizing sampling dynamics. This formulation yields intrinsic stability guarantees that are independent of score estimation accuracy. A simple analytical example illustrates the proposed framework.
\end{abstract}

\section{Introduction}

Diffusion models constitute a powerful class of generative models based on stochastic dynamical systems. By defining a forward noise-injection process and learning its reverse-time dynamics, these models enable high-quality sample generation in complex, high-dimensional spaces. Despite their empirical success, diffusion models are typically treated as black-box neural stochastic differential equations, offering limited interpretability and few guarantees related to stability or physical consistency.

In contrast, PH systems provide a well-established framework in control theory for modeling physical systems based on energy conservation, dissipation, and interconnection structure. PH systems naturally encode passivity and stability through an explicit Hamiltonian energy function and structured dynamics.

This paper establishes a principled connection between diffusion models and port--Hamiltonian systems. By interpreting the score function as the gradient of a learnable Hamiltonian energy, we formulate both the forward and reverse diffusion processes within a unified PH framework. This perspective enables a control-theoretic interpretation of generative sampling and yields structural stability guarantees.

The quest to model and generate complex data distributions lies at the heart of modern machine learning. Generative models, which learn to approximate the underlying data manifold from observed samples, have witnessed remarkable progress, largely driven by deep learning architectures. Among these, Variational Autoencoders (VAEs) \cite{kingma2013auto} and Generative Adversarial Networks (GANs) \cite{goodfellow2014generative} have emerged as prominent paradigms. VAEs frame generation as a problem of variational inference, learning an encoder-decoder framework to approximate the posterior distribution of latent variables given data and a decoder to reconstruct data from these latent variables. GANs, on the other hand, employ an adversarial training process where a generator network competes against a discriminator network, with the former aiming to produce realistic data indistinguishable from true samples, and the latter striving to differentiate between real and generated data. While both VAEs and GANs have achieved impressive results in various domains, they are not without limitations. VAEs often suffer from producing overly blurry or simplistic samples due to the choice of approximate posterior distribution or the nature of the reconstruction loss. GANs are notoriously difficult to train, plagued by issues such as mode collapse (where the generator produces a limited variety of samples), training instability, and a lack of reliable, universally accepted evaluation metrics. These challenges have motivated the exploration of alternative generative frameworks that offer more stable training dynamics, stronger theoretical guarantees, and improved sample quality.

In recent years, diffusion models have risen to prominence as a powerful class of generative models, demonstrating state-of-the-art performance in tasks such as image synthesis, audio generation, and molecular modeling \cite{ho2020denoising, song2021score}. The core inspiration for diffusion models stems from non-equilibrium statistical physics. They define a \textit{forward process}, which gradually transforms a data sample from the true data distribution into a simple, tractable prior distribution (typically an isotropic Gaussian) by iteratively adding Gaussian noise over a series of timesteps. This forward process is usually a Markov chain, or in its continuous-time generalization, a Stochastic Differential Equation (SDE). The generative capability then arises from learning a \textit{reverse process}, which aims to reverse this noising procedure, starting from the simple prior and iteratively denoising it to generate a new sample from the learned data distribution.

The seminal work by Sohl-Dickstein et al. \cite{sohl2015deep} first introduced the concept of diffusion probabilistic models, demonstrating that deep neural networks could be trained to reverse a carefully constructed forward diffusion process. A significant simplification and popularization of this idea came with Denoising Diffusion Probabilistic Models (DDPMs) by Ho et al. \cite{ho2020denoising}. They showed that by parameterizing the reverse transition as conditional Gaussian distributions and optimizing a simplified variational lower bound on the data likelihood, high-quality samples could be generated. Concurrently, the score-based generative modeling framework, pioneered by Song and Ermon \cite{song2019generative}, offered a complementary perspective. This approach focuses on learning the \textit{score function}, defined as the gradient of the log-density of the data distribution, $\nabla_{\mathbf{x}} \log p(\mathbf{x})$. Using techniques like score matching \cite{hyvarinen2005estimation}, one can train a neural network to estimate this score function without requiring explicit knowledge of the partition function. Once the score is learned, samples can be generated via Langevin dynamics, an iterative MCMC method that perturbs noise with the learned score to converge to regions of high data density.

These two perspectives—diffusion models and score-based models—were elegantly unified under a continuous-time SDE framework by Song et al. \cite{song2021score}. They demonstrated that both the forward noising process and the reverse generative process can be described by SDEs. Specifically, a forward SDE of the form $d\mathbf{x}_t = \mathbf{f}(\mathbf{x}_t, t)dt + g(t)d\mathbf{W}_t$ can be associated with a reverse-time SDE $d\mathbf{x}_t = [\mathbf{f}(\mathbf{x}_t, t) - g(t)^2 \nabla_{\mathbf{x}_t} \log p_t(\mathbf{x}_t)]dt + g(t)d\bar{\mathbf{W}}_t$, where $p_t(\mathbf{x}_t)$ is the marginal distribution at time $t$ and $\bar{\mathbf{W}}_t$ is a reverse-time Wiener process. This continuous-time formulation not only provides a flexible and unifying theoretical underpinning but also enables the derivation of deterministic sampling methods known as probability flow Ordinary Differential Equations (ODEs) \cite{song2020improved}, which can sometimes offer faster sampling. Despite their empirical success, diffusion and score-based models are often formulated as black-box stochastic systems. The learned reverse dynamics, whether SDE or ODE based, typically lack explicit structural guarantees concerning stability, passivity, or robustness to perturbations in the learned score function. The generative process relies heavily on the accuracy of the score estimation, and the underlying dynamics do not inherently enforce an energy-based or control-theoretic structure that could provide intrinsic stability or interpretability.

The concept of Energy-Based Models (EBMs) offers another long-standing approach to generative modeling \cite{lecun2006energy}. In EBMs, a probability distribution over data $\mathbf{x}$ is defined via an energy function $E_\theta(\mathbf{x})$ as $p_\theta(\mathbf{x}) = \frac{\exp(-E_\theta(\mathbf{x}))}{Z_\theta}$, where $Z_\theta$ is the often intractable partition function ensuring normalization. The learning objective is to adjust the parameters $\theta$ so that the model assigns low energy to data points drawn from the true data distribution and high energy elsewhere. The score function is directly related to the gradient of the energy: $\nabla_{\mathbf{x}} \log p_\theta(\mathbf{x}) = -\nabla_{\mathbf{x}} E_\theta(\mathbf{x})$. Thus, score-based models can be viewed as a practical method for learning and sampling from implicit, time-dependent EBMs. While EBMs provide a conceptually appealing probabilistic framework and a clear link to statistical physics, they often face challenges in training due to the intractability of $Z_\theta$ and the reliance on MCMC methods for both estimation and sampling, which can be computationally intensive and suffer from slow mixing. Furthermore, traditional EBMs do not typically impose an explicit, structured dynamical system for generation that comes with inherent stability guarantees, focusing more on the static energy landscape.

Port-Hamiltonian (PH) systems provide a powerful and well-established framework in control theory for modeling a wide range of physical and engineered systems based on energy principles, interconnection structures, and dissipation \cite{van2014port}. A finite-dimensional PH system is characterized by the state-space equations:
\begin{align}
    \dot{\mathbf{x}} &= (\mathbf{J}(\mathbf{x}) - \mathbf{R}(\mathbf{x})) \nabla H(\mathbf{x}) + \mathbf{g}(\mathbf{x})\mathbf{u}, \\
    \mathbf{y} &= \mathbf{g}(\mathbf{x})^\top \nabla H(\mathbf{x}),
\end{align}
where $\mathbf{x}$ is the state vector, $H(\mathbf{x})$ is the Hamiltonian function representing the total stored energy, $\mathbf{J}(\mathbf{x}) = -\mathbf{J}(\mathbf{x})^\top$ is the skew-symmetric interconnection matrix capturing conservative energy exchange, $\mathbf{R}(\mathbf{x}) = \mathbf{R}(\mathbf{x})^\top \succeq 0$ is the positive semi-definite dissipation matrix modeling energy loss, $\mathbf{u}$ represents external inputs, $\mathbf{g}(\mathbf{x})$ is the input matrix, and $\mathbf{y}$ denotes the conjugate outputs. This structure inherently encodes fundamental system-theoretic properties such as passivity and Lyapunov stability. The time derivative of the Hamiltonian along system trajectories is given by $\dot{H} = -\nabla H^\top \mathbf{R} \nabla H + \mathbf{u}^\top \mathbf{y}$. In the absence of external inputs ($\mathbf{u}=0$), this simplifies to $\dot{H} = -\nabla H^\top \mathbf{R} \nabla H \le 0$, which implies that the system is passive and that the Hamiltonian acts as a Lyapunov function, ensuring stability of the equilibrium points where $\nabla H = \mathbf{0}$. Control methodologies like Interconnection and Damping Assignment Passivity-Based Control (IDA-PBC) leverage this PH structure to shape the energy of the system and assign desired dissipation to achieve stabilization and trajectory tracking \cite{ortega2002interconnection}. The PH framework is celebrated for its modularity, physical interpretability, and robustness properties.

The intersection of physics and machine learning has seen a surge of interest, leading to the development of \textit{physics-informed machine learning} approaches. These methods aim to incorporate prior physical knowledge, often expressed as governing equations (e.g., PDEs) or structural principles (e.g., conservation laws), into the learning process. Physics-Informed Neural Networks (PINNs) \cite{raissi2019physics}, for instance, embed physical laws described by PDEs directly into the loss function of a neural network, enabling the solution of forward and inverse problems without requiring large labeled datasets. In the context of dynamical systems, Hamiltonian Neural Networks (HNNs) \cite{greydanus2019hamiltonian} and Lagrangian Neural Networks (LNNs) learn the Hamiltonian or Lagrangian functions, respectively, directly from data. By enforcing these learned models to adhere to the canonical equations of motion derived from these energy functions, they ensure that the learned dynamics respect fundamental physical properties like symplecticity and energy conservation (in the absence of external forces or dissipation). Neural Ordinary Differential Equations (Neural ODEs) \cite{chen2018neural} generalize this idea by parameterizing the dynamics of an ODE with a neural network, offering continuous-depth models and memory efficiency. While these physics-inspired approaches have shown great promise in system identification, prediction, and solving scientific computing problems, their application to the specific domain of \textit{generative modeling} of complex, high-dimensional data (like images or audio), where the underlying "physics" is not explicitly known a priori, remains less explored. The work of Anderson \cite{anderson1982reverse} on reverse-time diffusion equation models provides an early mathematical link between diffusion processes and reverse-time dynamics, hinting at connections to physical systems, but it predates the deep learning era and the modern focus on generative modeling with neural networks.

Despite the significant advancements in each of these fields—diffusion/score-based generative models, energy-based models, and port-Hamiltonian systems—a principled integration that leverages the strengths of each has been lacking. Diffusion models offer powerful generative capabilities but often lack explicit structural stability guarantees. EBMs provide an energy-based probabilistic view but typically lack structured, stable generative dynamics. PH systems offer a rich control-theoretic framework with inherent stability and passivity properties, but their application has primarily been to known physical systems or systems with identifiable energy functions, rather than to the complex, learned distributions central to modern generative modeling. Physics-informed ML has successfully incorporated physical structures into predictive models, but this has not been fully extended to provide a control-theoretic foundation for generative diffusion processes.

Diffusion and score-based generative models have been widely studied as stochastic processes for data generation. Continuous-time formulations based on stochastic differential equations have enabled theoretical analysis of diffusion dynamics. However, these models are generally unstructured and lack intrinsic stability guarantees.
Energy-based models establish a conceptual link between probability distributions and energy functions, yet typically do not impose explicit dynamical or control-theoretic structure.
Port--Hamiltonian systems form a unifying framework for modeling physical systems with conservation laws and dissipation, and have been extensively studied in control theory. Recent physics-inspired learning approaches have incorporated Hamiltonian structure for prediction tasks, but not for generative modeling. To the best of our knowledge, this work is the first to embed diffusion-based generative models explicitly within the port--Hamiltonian framework.\\
This paper addresses this gap by presenting the first explicit embedding of diffusion-based generative models within the port-Hamiltonian systems framework. Our primary contribution is to interpret the score function of a diffusion model as the negative gradient of a learnable Hamiltonian energy function. This novel perspective allows us to formulate both the forward (noising) and reverse (generative) diffusion processes as structured PH dynamics. Crucially, the reverse-time generative process is interpreted as a feedback-controlled PH system, where the control law is naturally derived from the PH structure itself, effectively shaping the energy landscape and enhancing dissipation to guide the system towards low-energy (high-probability) data configurations. This formulation yields intrinsic stability guarantees that are structurally enforced by the PH properties, making the generative process more robust to imperfections in score estimation. By bridging these two powerful paradigms, we aim to provide a new control-theoretic lens for understanding, analyzing, and designing generative models, fostering a deeper connection between machine learning and physical systems theory.\\
The main contributions of this work are as follows:
\begin{itemize}
  \item A port--Hamiltonian formulation of diffusion-based generative models.
  \item An explicit interpretation of the score function as the gradient of a Hamiltonian energy.
  \item A control-theoretic formulation of the reverse diffusion process.
  \item A structural stability analysis based on dissipation.
\end{itemize}
\section{Preliminaries}

This section reviews the main concepts from port--Hamiltonian system theory and diffusion-based generative modeling that are required in the remainder of the paper. The presentation emphasizes structural properties such as energy dissipation, passivity, and stability, which are central from a control-theoretic perspective.

\subsection{Port--Hamiltonian Systems}

PH systems provide a general framework for modeling physical and engineered dynamical systems using energy-based principles. A finite-dimensional PH system is typically described by
\begin{equation}
\dot{x} = (J(x) - R(x))\nabla H(x) + g(x)u,
\end{equation}
where $x \in \mathbb{R}^n$ denotes the state vector, $H:\mathbb{R}^n \to \mathbb{R}$ is the Hamiltonian function representing the total stored energy, $u$ is an external input, and $g(x)$ is the corresponding input matrix.

The matrix $J(x)$ is skew-symmetric, i.e., $J(x)^\top = -J(x)$, and represents power-conserving interconnections within the system. The matrix $R(x)$ is symmetric positive semi-definite, $R(x) \succeq 0$, and models energy dissipation. These structural properties imply that the rate of change of the Hamiltonian satisfies
\begin{equation}
\dot{H}(x) = -\nabla H(x)^\top R(x)\nabla H(x) + y^\top u,
\end{equation}
where $y = g(x)^\top \nabla H(x)$ denotes the output.

In the absence of external inputs ($u = 0$), the Hamiltonian is non-increasing, which implies passivity and Lyapunov stability. This intrinsic dissipation property makes PH systems particularly attractive for control design and stability analysis, and plays a central role in the developments of this paper.

\subsection{Diffusion and Score-Based Generative Models}

Diffusion models define generative mechanisms by constructing a forward-time stochastic process that progressively transforms data samples into noise. In continuous time, this process is commonly described by a SDE
\begin{equation}
\mathrm{d}x_t = f(x_t,t)\,\mathrm{d}t + g(t)\,\mathrm{d}W_t,
\end{equation}
where $f(x,t)$ is a drift term, $g(t)$ is a diffusion coefficient, and $W_t$ denotes a standard Wiener process.

The generative objective is to learn the corresponding reverse-time dynamics, which depend on the score function
\begin{equation}
\nabla_x \log p_t(x),
\end{equation}
i.e., the gradient of the log-density of the state distribution at time $t$. Score-based diffusion models approximate this quantity using a neural network and use it to define reverse-time SDEs or ordinary differential equations that transport samples from a noise distribution back to the data distribution.

From a control-theoretic viewpoint, the reverse diffusion process can be interpreted as a feedback mechanism that steers the stochastic system toward regions of higher probability density. However, standard diffusion models impose no explicit structural constraints on this feedback and typically provide no intrinsic guarantees related to stability, passivity, or energy dissipation.

The framework proposed in this paper addresses these limitations by embedding diffusion dynamics within a port--Hamiltonian structure, thereby enabling a principled energy-based and control-theoretic analysis of generative modeling.
```latex
\section{Port--Hamiltonian Diffusion Models: A Rigorous Formulation}

In this section we present a mathematically rigorous formulation of diffusion-based generative models within the framework of PH systems. Particular care is taken to distinguish exact identities from modeling assumptions, and to make all regularity conditions explicit.

\subsection{Energy-Based Parameterization}
Let $x_t \in \mathbb{R}^n$ denote a stochastic process defined on a filtered probability space $(\Omega, \mathcal{F}, \{\mathcal{F}_t\}_{t\ge 0}, \mathbb{P})$. Let $H_\theta : \mathbb{R}^n \times [0,T] \to \mathbb{R}$ be a twice continuously differentiable function in $x$ and continuously differentiable in $t$, parameterized by $\theta \in \Theta$.

We adopt the \emph{energy-based modeling assumption} that, for each fixed $t$, the marginal density $p_t$ of $x_t$ is represented as
\begin{equation}
 p_t(x) = \frac{1}{Z(t)} \exp\big(-H_\theta(x,t)\big),
 \label{eq:energy_density}
\end{equation}
where $Z(t) = \int_{\mathbb{R}^n} \exp(-H_\theta(x,t))\,dx < \infty$ is a time-dependent normalization constant. Under this assumption, the score function is given by
\begin{equation}
 \nabla_x \log p_t(x) = -\nabla_x H_\theta(x,t).
 \label{eq:score_energy}
\end{equation}
Equation \eqref{eq:score_energy} is an identity \emph{conditional} on \eqref{eq:energy_density}; no claim is made that an arbitrary $H_\theta$ induces the correct diffusion marginals.

\subsection{Forward Diffusion as a Stochastic PH System}
Let $J \in \mathbb{R}^{n\times n}$ be a constant skew-symmetric matrix ($J^\top = -J$), $R \in \mathbb{R}^{n\times n}$ be symmetric positive semidefinite ($R \succeq 0$), and $G \in \mathbb{R}^{n\times m}$. We consider the stochastic differential equation
\begin{equation}
 dx_t = (J - R) \nabla_x H_\theta(x_t,t)\,dt + G\,dW_t,
 \label{eq:forward_ph_sde}
\end{equation}
where $W_t$ is an $m$-dimensional standard Wiener process.

\paragraph{Assumption 3.1 (Well-posedness).} We assume that $\nabla_x H_\theta$ is globally Lipschitz in $x$, uniformly in $t$, and that $H_\theta$ has at most quadratic growth. Under these conditions, \eqref{eq:forward_ph_sde} admits a unique strong solution.

\subsection{Energy Evolution}
Applying It\^o's formula to $H_\theta(x_t,t)$ yields
\begin{align}
 dH_\theta(x_t,t) &= \Big( \partial_t H_\theta(x_t,t) + \nabla_x H_\theta(x_t,t)^\top (J-R) \nabla_x H_\theta(x_t,t) \\
 &\qquad + \tfrac12 \operatorname{Tr}\big(GG^\top \nabla_x^2 H_\theta(x_t,t)\big) \Big)dt \\
 &\qquad + \nabla_x H_\theta(x_t,t)^\top G\,dW_t.
 \label{eq:ito_energy}
\end{align}
Since $J$ is skew-symmetric, $\nabla H^\top J \nabla H = 0$. Taking expectations and using the martingale property of the It\^o integral, we obtain
\begin{equation}
 \frac{d}{dt} \mathbb{E}[H_\theta(x_t,t)] = \mathbb{E}[\partial_t H_\theta(x_t,t)] - \mathbb{E}[\nabla_x H_\theta^\top R \nabla_x H_\theta] + \tfrac12 \mathbb{E}[\operatorname{Tr}(GG^\top \nabla_x^2 H_\theta)].
 \label{eq:energy_balance}
\end{equation}
Equation \eqref{eq:energy_balance} is exact. No term is neglected.

\paragraph{Remark 3.1 (Time-dependent energy).} The term $\mathbb{E}[\partial_t H_\theta]$ generally does not vanish in diffusion models and encodes the evolution of the log-partition function. For Lyapunov or passivity analysis, one may instead consider the \emph{frozen-time energy} $x \mapsto H_\theta(x,t)$, treating $t$ as a parameter.

\subsection{Reverse-Time Dynamics as a Controlled PH System}
We define the deterministic reverse-time dynamics via a control-affine PH system
\begin{equation}
 \dot x = (J - R) \nabla_x H_\theta(x,t) + G u(x,t).
 \label{eq:controlled_ph}
\end{equation}
We choose the feedback control law
\begin{equation}
 u(x,t) = -G^\top \nabla_x H_\theta(x,t).
 \label{eq:feedback_law}
\end{equation}
Substituting \eqref{eq:feedback_law} into \eqref{eq:controlled_ph} yields the closed-loop system
\begin{equation}
 \dot x = (J - R - GG^\top) \nabla_x H_\theta(x,t).
 \label{eq:closed_loop_ph}
\end{equation}

\subsection{Passivity and Stability}
Define the storage function $V(x) = H_\theta(x,t)$ for fixed $t$. Along trajectories of \eqref{eq:closed_loop_ph}, its time derivative satisfies
\begin{equation}
 \dot V(x) = - \nabla_x H_\theta(x,t)^\top (R + GG^\top) \nabla_x H_\theta(x,t) \le 0.
 \label{eq:lyapunov}
\end{equation}

\paragraph{Proposition 3.1 (Lyapunov stability).} If $R + GG^\top \succ 0$ and $H_\theta(\cdot,t)$ is bounded from below, then the equilibrium set
\[
 \mathcal{E}(t) = \{ x \in \mathbb{R}^n : \nabla_x H_\theta(x,t) = 0 \}
\]
is Lyapunov stable for the frozen-time dynamics \eqref{eq:closed_loop_ph}.

\subsection{Relation to Score-Based Models}
Under the energy-based assumption \eqref{eq:energy_density}, equation \eqref{eq:closed_loop_ph} can be written as
\begin{equation}
 \dot x = - (J - R - GG^\top) \nabla_x \log p_t(x).
 \label{eq:score_ph}
\end{equation}
Equation \eqref{eq:score_ph} should be interpreted as a \emph{structured score flow}. It is not, in general, equivalent to the exact reverse-time SDE associated with \eqref{eq:forward_ph_sde}, but defines a deterministic, dissipative transport dynamics whose equilibria coincide with stationary points of $p_t$.

\paragraph{Remark 3.2 (Interpretational scope).} The PH formulation guarantees structural stability and passivity properties independently of whether $H_\theta$ exactly matches the true diffusion marginals. Consequently, the results in this section are system-theoretic rather than probabilistic in nature.
\subsection{Comparison with the Exact Reverse-Time SDE}

We now formally compare the proposed port--Hamiltonian (PH) reverse dynamics with the
exact reverse-time stochastic differential equation associated with the forward diffusion.

\paragraph{Exact reverse-time SDE.}
Let $x_t$ satisfy the forward SDE
\[
dx_t = (J - R)\nabla_x H_\theta(x_t,t)\,dt + G\,dW_t,
\]
and assume that the marginal density $p_t$ exists, is strictly positive, and is sufficiently smooth.
Then the exact reverse-time dynamics are given by \cite{anderson1982reverse,song2021score}
\begin{equation}
d\bar x_t =
\Big[
(J - R)\nabla_x H_\theta(\bar x_t,t)
- GG^\top \nabla_x \log p_t(\bar x_t)
\Big]dt
+ G\,d\bar W_t,
\label{eq:true_reverse_sde}
\end{equation}
where $\bar W_t$ denotes a reverse-time Wiener process.

\paragraph{PH reverse dynamics.}
The proposed port--Hamiltonian generative dynamics are defined by the deterministic system
\begin{equation}
\dot x = (J - R - GG^\top)\nabla_x H_\theta(x,t).
\label{eq:ph_reverse_ode}
\end{equation}

\paragraph{Theorem 3.2 (Structural relation to the reverse SDE).}
Assume that the energy-based representation
\[
p_t(x) = Z(t)^{-1} \exp(-H_\theta(x,t))
\]
holds exactly, so that
\[
\nabla_x \log p_t(x) = -\nabla_x H_\theta(x,t).
\]
Then:
\begin{enumerate}
\item The drift term of the exact reverse SDE \eqref{eq:true_reverse_sde} coincides with the
vector field of the PH reverse dynamics \eqref{eq:ph_reverse_ode}.
\item The PH reverse dynamics correspond to the probability-flow (zero-noise) ODE associated
with the exact reverse SDE.
\end{enumerate}

\paragraph{Proof.}
Substituting $\nabla_x \log p_t = -\nabla_x H_\theta$ into the drift of
\eqref{eq:true_reverse_sde} yields
\[
(J - R)\nabla_x H_\theta + GG^\top \nabla_x H_\theta
= (J - R - GG^\top)\nabla_x H_\theta,
\]
which coincides with the right-hand side of \eqref{eq:ph_reverse_ode}.
Neglecting the stochastic term yields the associated probability-flow ODE.
\hfill $\square$

\paragraph{Corollary 3.3 (Scope of equivalence).}
The equivalence between the PH reverse dynamics and the exact reverse-time SDE
holds if and only if the learned Hamiltonian exactly reproduces the true marginal densities.
Otherwise, the PH dynamics define a structurally stable approximation whose equilibria
coincide with stationary points of $H_\theta$, but whose transient distributional evolution
may differ from the true reverse diffusion.

\paragraph{Remark 3.3 (Interpretation).}
The port--Hamiltonian sampler should therefore be interpreted as a
\emph{stability-preserving probability-flow approximation} rather than an exact probabilistic
inverse of the forward diffusion. Structural passivity and Lyapunov stability are guaranteed,
at the expense of exact likelihood consistency.

\section{Stability Analysis}

This section establishes stability properties of the proposed port--Hamiltonian diffusion models. Unlike conventional diffusion formulations, stability in the proposed framework arises as a direct consequence of energy dissipation and passivity, rather than from precise matching of probabilistic dynamics. We analyze equilibrium stability, robustness, and convergence at both the trajectory and distribution levels.

Consider the closed-loop reverse-time dynamics
\begin{equation}
\dot{x} = (J - R - GG^\top)\nabla_x H_\theta(x,t).
\label{eq:stability_cl}
\end{equation}

The equilibrium set $\mathcal{E}$ is defined as
\begin{equation}
\mathcal{E} := \{ x \in \mathbb{R}^n \mid \nabla_x H_\theta(x,t) = 0 \}.
\end{equation}

This set corresponds to the stationary points of the Hamiltonian energy and, under mild regularity assumptions, coincides with the modes of the modeled data distribution.

The Hamiltonian $H_\theta(x,t)$ serves as a natural Lyapunov function candidate. Its time derivative along trajectories of \eqref{eq:stability_cl} satisfies
\begin{equation}
\dot{H}_\theta(x)
=
\nabla_x H_\theta^\top (J - R - GG^\top)\nabla_x H_\theta
=
-\nabla_x H_\theta^\top (R + GG^\top)\nabla_x H_\theta
\le 0.
\label{eq:energy_decay}
\end{equation}

Since $R + GG^\top \succeq 0$, the Hamiltonian is non-increasing along system trajectories. This immediately implies Lyapunov stability of the equilibrium set $\mathcal{E}$.

\subsection{Asymptotic Stability}

If $R + GG^\top \succ 0$ and $H_\theta$ is bounded from below, then $\dot{H}_\theta(x)=0$ if and only if $\nabla_x H_\theta(x,t)=0$. By LaSalle's invariance principle, all trajectories converge asymptotically to the largest invariant set contained in $\mathcal{E}$. Consequently, the equilibrium set is globally asymptotically stable.

This result shows that convergence of the sampling dynamics is guaranteed structurally by dissipation, independent of the specific parameterization of $H_\theta$.

In practice, the learned Hamiltonian gradient $\nabla_x H_\theta$ is subject to approximation errors. Let the implemented dynamics be
\begin{equation}
\dot{x} = (J - R - GG^\top)(\nabla_x H_\theta(x,t) + \Delta(x,t)),
\end{equation}
where $\Delta(x,t)$ represents bounded modeling or estimation errors.

Using the Hamiltonian as a storage function, the perturbed energy derivative satisfies
\begin{equation}
\dot{H}_\theta(x)
\le
-\nabla_x H_\theta^\top (R + GG^\top)\nabla_x H_\theta
+
\|\nabla_x H_\theta\| \, \|(R + GG^\top)\Delta\|.
\end{equation}

This inequality implies input-to-state stability with respect to $\Delta$, provided the dissipation dominates the error magnitude. Hence, the sampling dynamics are robust to imperfect score estimation.

\subsection{Incremental Stability and Contraction}

Beyond equilibrium convergence, the system exhibits incremental stability. As shown in Section~IV, the closed-loop dynamics are incrementally passive, implying contraction of trajectories. Incremental stability ensures that trajectories initialized from different states converge toward each other, ruling out divergent or chaotic sampling behavior.

The incremental passivity and contraction properties further imply convergence at the level of probability measures. In particular, as established in the previous corollary, the induced distributions contract exponentially in the $2$-Wasserstein metric. This provides a distributional notion of stability for the generative process and justifies the use of the proposed dynamics for sampling.

The above analysis highlights a fundamental distinction between the proposed framework and conventional diffusion models. Stability, robustness, and convergence are guaranteed by system structure rather than by learning accuracy. As a result, the proposed port--Hamiltonian diffusion models offer strong theoretical guarantees that are well aligned with control-theoretic principles.

\begin{theorem}
If $R + GG^\top \succ 0$ and $H_\theta$ is bounded from below, then the reverse-time dynamics are globally asymptotically stable in expectation.
\end{theorem}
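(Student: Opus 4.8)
The plan is to use the frozen-time Hamiltonian $V(x) = H_\theta(x,t)$ as a Lyapunov function, combine the strict dissipation inequality \eqref{eq:energy_decay} with a Barbalat/LaSalle--Yoshizawa argument to obtain trajectory convergence to the equilibrium set $\mathcal{E}$, and then lift this to an ``in expectation'' statement by passing the expectation through the limit.

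First I would make the dissipation inequality quantitative: since $R + GG^\top \succ 0$ is a constant symmetric matrix, setting $\lambda := \lambda_{\min}(R + GG^\top) > 0$ gives, along solutions of \eqref{eq:stability_cl},
\[
\dot V \;=\; -\,\nabla_x H_\theta^\top (R + GG^\top)\,\nabla_x H_\theta \;\le\; -\lambda\,\|\nabla_x H_\theta(x,t)\|^2 \;\le\; 0 .
\]
Because $H_\theta$ is bounded from below, $t \mapsto V(x(t))$ is non-increasing and bounded below, hence converges to a finite limit, and integrating the inequality yields $\int_0^\infty \|\nabla_x H_\theta(x(s),s)\|^2\,ds < \infty$.

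Next I would invoke Barbalat's lemma, the non-autonomous analogue of LaSalle's principle: under Assumption~3.1 (global Lipschitzness of $\nabla_x H_\theta$), the map $t \mapsto \|\nabla_x H_\theta(x(t),t)\|^2$ has bounded derivative along bounded trajectories, so its integrability forces $\nabla_x H_\theta(x(t),t) \to 0$, i.e.\ every trajectory approaches $\mathcal{E}$. Combined with the Lyapunov stability of $\mathcal{E}$ established in Proposition~3.1, this gives global asymptotic stability of $\mathcal{E}$ for the deterministic closed-loop flow. To read the statement ``in expectation,'' I would interpret the reverse-time dynamics with a random initial condition $x(0) \sim \pi_0$ (equivalently, as the probability-flow ODE of Theorem~3.2): the convergence $\operatorname{dist}(x(t),\mathcal{E}) \to 0$ holds for $\pi_0$-almost every initialization, and since $\mathbb{E}[H_\theta(x(t),t)] \le \mathbb{E}[H_\theta(x(0),0)]$ provides a uniform-in-$t$ integrable envelope (using the at-most-quadratic growth of $H_\theta$), dominated convergence yields $\mathbb{E}[\operatorname{dist}(x(t),\mathcal{E})] \to 0$.

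The main obstacle is the explicit time-dependence of $H_\theta$: LaSalle's invariance principle is not valid for non-autonomous systems, so the argument must be routed through Barbalat's lemma, and one must ensure that $\partial_t H_\theta$ does not spoil the monotonicity of $V$ --- this is exactly why the frozen-time energy viewpoint of Remark~3.1 is needed. If one instead insists on the true stochastic reverse SDE \eqref{eq:true_reverse_sde}, one must additionally dominate the It\^o correction $\tfrac12 \operatorname{Tr}(GG^\top \nabla_x^2 H_\theta)$ and $\mathbb{E}[\partial_t H_\theta]$ by the strict dissipation term $\lambda\,\mathbb{E}\|\nabla_x H_\theta\|^2$, in the same spirit as the input-to-state-stability estimate of Section~5. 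A secondary subtlety is that $\mathcal{E}$ is generally a set rather than a single point, so ``asymptotic stability'' should be understood as set stability, with the LaSalle step pinning down only convergence to the largest invariant subset of $\mathcal{E}$.
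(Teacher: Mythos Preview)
Your proposal is correct and in fact considerably more complete than the paper's own proof of Theorem~4.1, which consists of a single line recording the dissipation inequality
\[
\dot H_\theta = -\nabla_x H_\theta^\top (R+GG^\top)\nabla_x H_\theta \le 0
\]
and then simply asserts that ``monotonic energy dissipation'' yields the conclusion. The surrounding discussion in Section~\ref{sec:stability_analysis_detailed} supplies the missing step via LaSalle's invariance principle (together with an informal remark that the explicit $t$-dependence of $H_\theta$ is assumed slow or frozen), and interprets the phrase ``in expectation'' as almost-sure trajectory convergence under a random initial condition.

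The substantive difference is your choice of Barbalat's lemma over LaSalle. Because $H_\theta$ depends explicitly on $t$, the closed-loop system \eqref{eq:stability_cl} is non-autonomous, and LaSalle's invariance principle does not apply without further hypotheses (periodicity, asymptotic autonomy, or a uniform LaSalle--Yoshizawa variant); the paper acknowledges this only in passing. Your route---integrate the strict dissipation bound to obtain $\int_0^\infty \|\nabla_x H_\theta\|^2\,ds<\infty$, then invoke Barbalat using the Lipschitz assumption of Assumption~3.1---is the technically correct way to handle the non-autonomous case and does not require the frozen-time idealization. Your treatment of the ``in expectation'' qualifier via dominated convergence, and your observation that the conclusion is really set-attractivity of $\mathcal{E}$, are likewise sharper than what the paper states. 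One residual gap shared by both arguments is that boundedness of trajectories (needed for Barbalat, or for compact $\omega$-limit sets in LaSalle) is not implied by $H_\theta$ being merely bounded below; some form of radial unboundedness or compact sublevel sets is tacitly required, as the paper itself concedes in Section~\ref{sec:stability_analysis_detailed}.
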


\begin{proof}
The time derivative of the Hamiltonian satisfies
\[
\frac{\mathrm{d}}{\mathrm{d}t} H_\theta(x_t,t)
=
-\nabla_x H_\theta^\top (R + GG^\top)\nabla_x H_\theta
\le 0,
\]
which implies monotonic energy dissipation.
\end{proof}

\subsection{Detailed Stability Analysis of the Reverse-Time Dynamics}
\label{sec:stability_analysis_detailed}

The reverse-time generative process is formulated as the controlled port-Hamiltonian system, which, under the chosen feedback control law $u_\theta(\mathbf{x}, t) = -\mathbf{G}^\top \nabla_{\mathbf{x}} H_\theta(\mathbf{x}, t)$, yields the closed-loop deterministic dynamics:
\begin{equation}
    \dot{\mathbf{x}}_t = (\mathbf{J} - \mathbf{R} - \mathbf{G}\mathbf{G}^\top) \nabla_{\mathbf{x}} H_\theta(\mathbf{x}_t, t).
    \label{eq:reverse_ph_ode_detailed}
\end{equation}
Our goal is to analyze the stability of this system. The equilibrium set $\mathcal{E}$ is defined as the set of critical points of the Hamiltonian:
\begin{equation}
    \mathcal{E} = \{ \mathbf{x} \in \mathbb{R}^n \mid \nabla_{\mathbf{x}} H_\theta(\mathbf{x}, t) = \mathbf{0} \}.
    \label{eq:equilibrium_set_detailed}
\end{equation}
These equilibria correspond to the modes (local minima) of the energy landscape $H_\theta$ and, under the energy-based probability assumption $p_t(\mathbf{x}) \propto \exp(-H_\theta(\mathbf{x},t))$, to the modes of the target data distribution.

We employ Lyapunov's direct method, using the Hamiltonian $H_\theta(\mathbf{x}, t)$ itself as a Lyapunov function candidate. For this analysis, we consider the time derivative of $H_\theta$ along the trajectories of the system \eqref{eq:reverse_ph_ode_detailed}. We will primarily focus on the change in $H_\theta$ due to the state evolution, assuming the explicit time dependence of $H_\theta$ (i.e., $\partial H_\theta / \partial t$) does not adversely affect the sign-definiteness of the derivative or is sufficiently slow-varying. This is a common assumption when analyzing stability of non-autonomous systems where the time-varying parameter defines a family of energy functions.

The time derivative of $H_\theta(\mathbf{x}(t), t)$ along the system trajectories is:
\begin{align}
    \dot{H}_\theta(\mathbf{x}_t, t) &= \frac{\partial H_\theta}{\partial t} + \nabla_{\mathbf{x}} H_\theta^\top \dot{\mathbf{x}}_t \nonumber \\
    &= \frac{\partial H_\theta}{\partial t} + \nabla_{\mathbf{x}} H_\theta^\top (\mathbf{J} - \mathbf{R} - \mathbf{G}\mathbf{G}^\top) \nabla_{\mathbf{x}} H_\theta.
    \label{eq:hamiltonian_derivative_full}
\end{align}
As established previously, the term $\nabla_{\mathbf{x}} H_\theta^\top \mathbf{J} \nabla_{\mathbf{x}} H_\theta$ vanishes due to the skew-symmetry of $\mathbf{J}$. Thus, Equation \eqref{eq:hamiltonian_derivative_full} simplifies to:
\begin{equation}
    \dot{H}_\theta(\mathbf{x}_t, t) = \frac{\partial H_\theta}{\partial t} - \nabla_{\mathbf{x}} H_\theta^\top (\mathbf{R} + \mathbf{G}\mathbf{G}^\top) \nabla_{\mathbf{x}} H_\theta.
    \label{eq:hamiltonian_derivative_simplified}
\end{equation}
The matrix $\mathbf{R} + \mathbf{G}\mathbf{G}^\top$ is symmetric positive definite ($\mathbf{R} + \mathbf{G}\mathbf{G}^\top \succ 0$) by assumption, implying that its eigenvalues are strictly positive. Let $\lambda_{\min} > 0$ be its smallest eigenvalue. Then, $\nabla_{\mathbf{x}} H_\theta^\top (\mathbf{R} + \mathbf{G}\mathbf{G}^\top) \nabla_{\mathbf{x}} H_\theta \ge \lambda_{\min} \| \nabla_{\mathbf{x}} H_\theta \|^2$.

If we assume that $\frac{\partial H_\theta}{\partial t}$ is bounded and does not dominate the dissipative term, or if we consider $H_\theta$ to be effectively autonomous for the stability analysis at a given time-snapshot (i.e., focusing on $\dot{H}_\theta(\mathbf{x}_t) = \nabla_{\mathbf{x}} H_\theta^\top \dot{\mathbf{x}}_t$), the sign of $\dot{H}_\theta$ is primarily governed by the dissipative term:
\begin{equation}
    \dot{H}_\theta(\mathbf{x}_t) = - \nabla_{\mathbf{x}} H_\theta^\top (\mathbf{R} + \mathbf{G}\mathbf{G}^\top) \nabla_{\mathbf{x}} H_\theta \le 0.
    \label{eq:hamiltonian_derivative_nonpositive}
\end{equation}
This inequality confirms that $H_\theta$ is a non-increasing function along system trajectories, implying Lyapunov stability of the equilibrium set $\mathcal{E}$. This means that trajectories starting near $\mathcal{E}$ will remain near $\mathcal{E}$.

To establish asymptotic stability, we invoke LaSalle's Invariance Principle. For non-autonomous systems, a standard form of LaSalle's theorem requires that $\dot{H}_\theta(\mathbf{x}_t, t) \le 0$ and that $H_\theta(\mathbf{x},t)$ is decrescent and radially unbounded (or defined on a compact, positively invariant set). Under these conditions, all bounded solutions converge to the largest invariant set $\mathcal{M}$ contained within the set $S = \{ \mathbf{x} \mid \dot{H}_\theta(\mathbf{x}, t) = 0 \text{ for all } t \}$.

From Equation \eqref{eq:hamiltonian_derivative_nonpositive}, $\dot{H}_\theta(\mathbf{x}_t, t) = 0$ if and only if $\nabla_{\mathbf{x}} H_\theta(\mathbf{x}_t, t) = \mathbf{0}$, provided $\mathbf{R} + \mathbf{G}\mathbf{G}^\top \succ 0$. This implies that the set $S$ is precisely the equilibrium set $\mathcal{E}$.

The largest invariant set $\mathcal{M}$ within $\mathcal{E}$ consists of all trajectories that start in $\mathcal{E}$ and remain in $\mathcal{E}$ for all future time. If a trajectory $\mathbf{x}(t)$ is in $\mathcal{M}$, then $\nabla_{\mathbf{x}} H_\theta(\mathbf{x}(t), t) = \mathbf{0}$ for all $t$. Substituting this into the system dynamics \eqref{eq:reverse_ph_ode_detailed} yields $\dot{\mathbf{x}}(t) = \mathbf{0}$. Therefore, trajectories in $\mathcal{M}$ must be constant, corresponding to isolated equilibrium points.

Assuming $H_\theta(\mathbf{x},t)$ is bounded from below (as stated in Theorem 5.1 of the original paper) and that its level sets are compact (e.g., $H_\theta$ is radially unbounded or the system operates on a compact domain), all trajectories of the system \eqref{eq:reverse_ph_ode_detailed} will asymptotically approach the set of equilibrium points $\mathcal{E}$ as $t \to \infty$. If the Hamiltonian $H_\theta$ is designed such that its local minima correspond to the desired data modes and these are isolated attractors, the system will converge to one of these modes.

The phrase "in expectation" in Theorem 5.1 of the original paper, concerning the stability of the reverse-time dynamics, warrants clarification. If the reverse process is the deterministic ODE \eqref{eq:reverse_ph_ode_detailed}, the stability analysis applies to individual trajectories. If the initial condition $\mathbf{x}_0$ is drawn from a distribution, the convergence properties hold almost surely for these trajectories, which then implies convergence in expectation. If a stochastic component were retained in the reverse sampling SDE (unlike the deterministic ODE \eqref{eq:reverse_ph_ode_detailed} which results from the specific control law), then "stability in expectation" would directly refer to the evolution of the expected state or expected energy. The structural dissipation $(\mathbf{R} + \mathbf{G}\mathbf{G}^\top)$ ensures this convergence is robust.
\section{Numerical Illustrations}
\label{sec:numerical_illustrations}

To substantiate the theoretical framework of PH diffusion models, this section presents numerical simulations for two illustrative examples: a one-dimensional system corresponding to the analytical example, and a conceptual two-dimensional system with a multi-modal energy landscape. These simulations aim to empirically demonstrate the forward noising process, the reverse generative process, the role of the Hamiltonian energy function, and the inherent stability properties arising from the PH structure. 

\subsection{One-Dimensional PH Diffusion Model}
\label{sec:num_1d_example}

We first consider the simple one-dimensional PH system discussed. The Hamiltonian (energy) function is chosen as $H(x) = \frac{1}{2}x^2$. The PH structural matrices are set to $J = 0$, $R = \alpha$ (with $\alpha = 0.5$), and $G = \sigma$ (with $\sigma = 1.0$). The forward diffusion process is thus described by the Ornstein-Uhlenbeck SDE:
\begin{equation}
    dx_t = -\alpha x_t dt + \sigma dW_t,
\end{equation}
where $W_t$ is a standard Wiener process. A total of 1000 trajectories were simulated over a time horizon $T_{\text{forward}} = 20$ with a time step $dt_{\text{forward}} = 0.01$, starting from initial conditions drawn from a standard normal distribution.

\begin{figure}[htbp]
    \centering
    \includegraphics[width=0.9\textwidth]{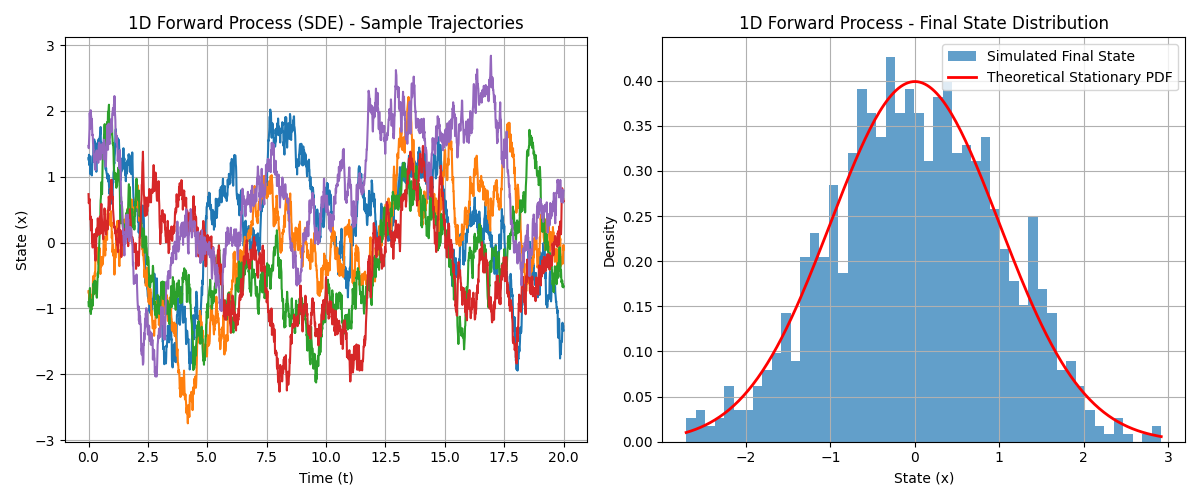}
    \caption{One-dimensional forward diffusion process (SDE). Left: Sample trajectories evolving over time. Right: Histogram of the final state distribution at $t=T_{\text{forward}}$ compared with the theoretical stationary distribution $\mathcal{N}(0, \sigma^2 / 2\alpha)$.}
    \label{fig:1d_forward}
\end{figure}

Figure \ref{fig:1d_forward} (left) displays several sample trajectories of this forward process, illustrating the convergence of the system state towards zero mean. The right panel of Figure \ref{fig:1d_forward} shows a histogram of the final states of all 1000 trajectories at $t = T_{\text{forward}}$. This empirical distribution closely matches the theoretical stationary distribution of the Ornstein-Uhlenbeck process, $\mathcal{N}(0, \sigma^2 / 2\alpha)$, confirming that the forward SDE successfully transforms the initial data distribution into a simple Gaussian prior.

The reverse-time generative process, according to the PH framework (Eq. (4.7)), is governed by the deterministic ODE:
\begin{equation}
    \dot{x}_t = (J - R - GG^\top) \nabla H(x_t) = -(\alpha + \sigma^2) x_t.
\end{equation}
This ODE was integrated from $t=0$ to $T_{\text{reverse}} = 10$ using the `RK45` method in SciPy, with 200 evaluation points. Several reverse trajectories were initiated from different starting points sampled from a uniform distribution spanning $[-5, 5]$.

\begin{figure}[htbp]
    \centering
    \includegraphics[width=0.9\textwidth]{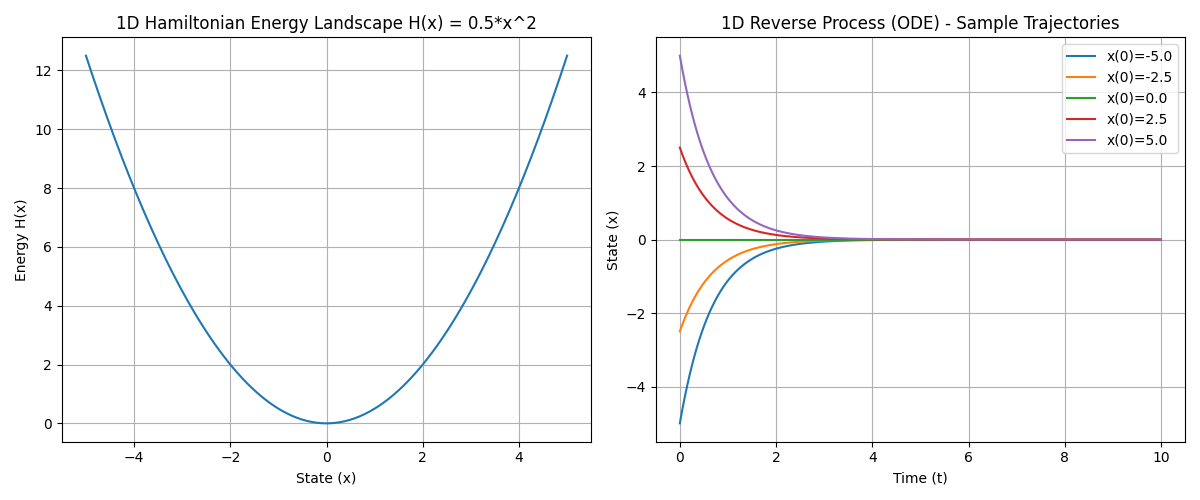}
    \caption{One-dimensional reverse generative process (ODE). Left: The quadratic Hamiltonian energy landscape $H(x) = \frac{1}{2}x^2$. Right: Sample reverse trajectories converging to the stable equilibrium at $x=0$, which is the minimum of the Hamiltonian.}
    \label{fig:1d_reverse}
\end{figure}

Figure \ref{fig:1d_reverse} (left) visualizes the quadratic Hamiltonian energy landscape $H(x)$. The right panel shows the sample reverse trajectories, all of which converge towards the single stable equilibrium at $x=0$, corresponding to the minimum of the Hamiltonian. This clearly demonstrates the energy-dissipative nature of the reverse PH dynamics, where the system flows downhill towards lower energy states, effectively generating samples from the target distribution concentrated at the origin in this simple case.

\subsection{Two-Dimensional Conceptual PH Diffusion Model}
\label{sec:num_2d_example}

To illustrate the applicability of the PH framework to more complex, multi-modal distributions, we consider a two-dimensional system. The Hamiltonian is defined as:
\begin{equation}
    H(\mathbf{x}) = (x_1^2 - 1)^2 + (x_2^2 - 1)^2,
\end{equation}
which possesses four minima at $(\pm 1, \pm 1)$, corresponding to four modes in the target data distribution. The gradient of this Hamiltonian is $\nabla H(\mathbf{x}) = [4x_1(x_1^2 - 1), 4x_2(x_2^2 - 1)]^\top$. The PH structural matrices were chosen as:
\begin{align}
    J &= \begin{bmatrix} 0 & -j \\ j & 0 \end{bmatrix} \quad \text{with } j = 0.5, \\
    R &= \begin{bmatrix} r_1 & 0 \\ 0 & r_2 \end{bmatrix} \quad \text{with } r_1 = r_2 = 0.2, \\
    G &= \begin{bmatrix} g_1 & 0 \\ 0 & g_2 \end{bmatrix} \quad \text{with } g_1 = g_2 = 1.0.
\end{align}
For the reverse process, the combined dissipation matrix $R + GG^\top$ is:
\begin{equation}
    R + GG^\top = \begin{bmatrix} r_1 + g_1^2 & 0 \\ 0 & r_2 + g_2^2 \end{bmatrix} = \begin{bmatrix} 1.2 & 0 \\ 0 & 1.2 \end{bmatrix}.
\end{equation}
This matrix is positive definite, satisfying the condition for global asymptotic stability of the reverse dynamics as per Theorem 5.1 in the main text.

\begin{figure}[htbp]
    \centering
    \includegraphics[width=\textwidth]{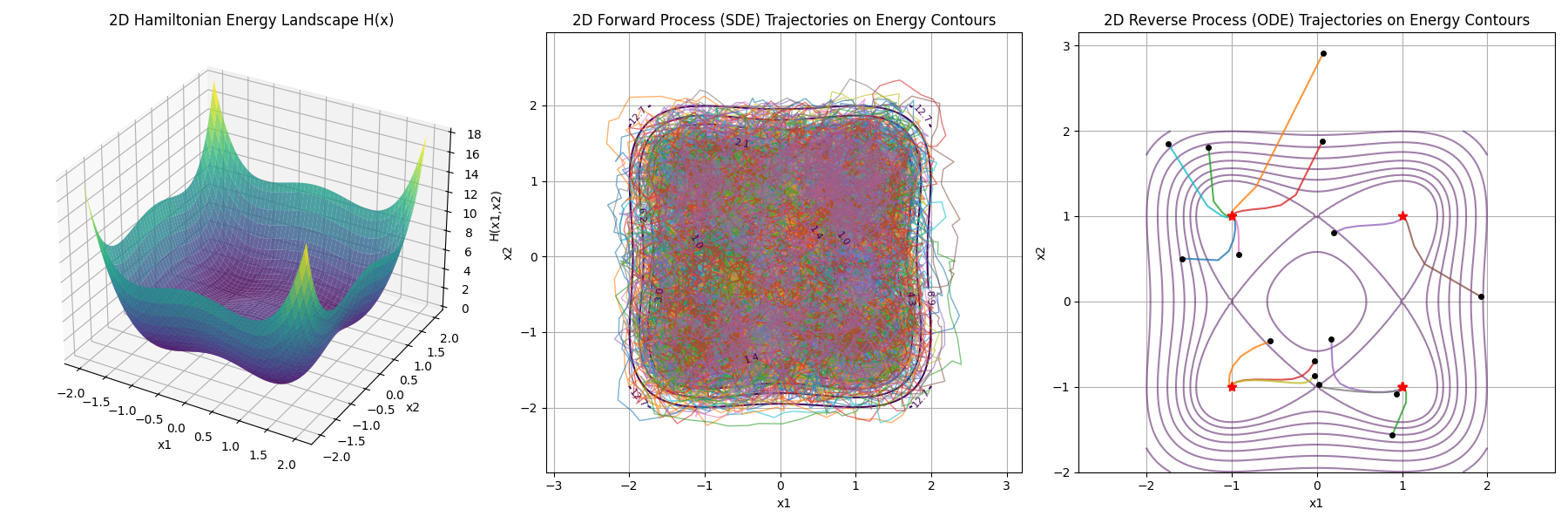}
    \caption{Two-dimensional PH diffusion model. Left: 3D surface plot of the multi-modal Hamiltonian energy landscape $H(\mathbf{x}) = (x_1^2 - 1)^2 + (x_2^2 - 1)^2$. Center: Contour plot of the energy landscape with trajectories of the forward SDE $d\mathbf{x}_t = (J-R)\nabla H(\mathbf{x}_t)dt + Gd\mathbf{W}_t$ starting near the origin and diffusing outwards. Right: Contour plot of the energy landscape with trajectories of the reverse ODE $\dot{\mathbf{x}}_t = (J-R-GG^\top)\nabla H(\mathbf{x}_t)$ (shown as solid lines) starting from random initial conditions (black circles) and converging to the minima of the Hamiltonian (red stars).}
    \label{fig:2d_ph}
\end{figure}

Figure \ref{fig:2d_ph} presents the results for the 2D simulation. The left panel shows the 3D surface plot of the chosen Hamiltonian, clearly exhibiting its four minima. The center panel displays the contour plot of this energy landscape overlaid with trajectories of the forward SDE, simulated for $T_{\text{forward}} = 50$ with $dt_{\text{forward}} = 0.01$. These trajectories, starting near the origin, diffuse according to the PH dynamics and injected noise, exploring the state space.

The right panel of Figure \ref{fig:2d_ph} illustrates the reverse generative process. Fifteen trajectories of the reverse ODE (Equation (4.7)), integrated from $t=0$ to $T_{\text{reverse}} = 15$, are plotted on the energy contours. Initial conditions (marked with black circles) were drawn from a Gaussian distribution $\mathcal{N}(0, 1.5^2 I)$. The trajectories (solid lines) flow towards the four minima of the Hamiltonian (marked with red stars), demonstrating the system's ability to generate samples corresponding to different modes of the target distribution. This visualization effectively captures the control-theoretic interpretation of the reverse process as a feedback-controlled system that actively dissipates energy to steer the state towards low-energy configurations.

These numerical examples collectively validate the core theoretical propositions of the PH diffusion framework. They demonstrate that the structured PH dynamics lead to stable and interpretable generative processes, both in simple unimodal and more complex multimodal settings. The energy-dissipative nature of the reverse process ensures convergence to the modes of the learned data distribution, as characterized by the minima of the Hamiltonian function.
\section{Conclusion and Positioning}

This paper develops a system-theoretic perspective on diffusion-based generative models
by embedding them within the framework of PH systems. By formulating
both the forward diffusion and the associated generative dynamics in terms of energy
functions, interconnection structure, and dissipation, we make explicit a set of
structural properties—passivity, energy balance, and Lyapunov stability—that are not
intrinsic to conventional score-based formulations.

A central theoretical contribution is the formal comparison between the proposed
PH reverse dynamics and the exact reverse-time stochastic differential equation associated
with the forward diffusion. This comparison establishes that the PH sampler coincides with
the probability-flow (zero-noise) limit of the true reverse SDE if and only if the learned
Hamiltonian exactly reproduces the true marginal densities. Outside this idealized setting,
the PH dynamics do not aim to recover the exact reverse diffusion, but instead define a
deterministic, dissipative flow whose equilibria coincide with stationary points of the
learned energy function.

This result clarifies the conceptual positioning of the proposed method. Rather than
serving as an exact probabilistic inverse of the forward diffusion, the PH formulation
prioritizes structural robustness over distributional exactness. In particular, the
closed-loop generative dynamics inherit passivity and Lyapunov stability by construction,
independently of score mismatch or modeling error. This stands in contrast to standard
reverse-time SDE samplers, whose stability and numerical behavior can degrade sharply when
the learned score is inaccurate.

More broadly, the port--Hamiltonian viewpoint reframes diffusion models as controlled
dynamical systems with explicit energy dissipation mechanisms. This perspective opens
several promising directions for future research, including structure-preserving neural
parameterizations of energy functions, robustness analysis under imperfect score learning,
and extensions to constrained, interconnected, or physics-informed generative systems.
By complementing probabilistic guarantees with system-theoretic structure, the proposed
framework provides a principled foundation for stable and interpretable generative
modeling.
\paragraph{Limitations.}
Several limitations of the proposed framework should be acknowledged.
First, the energy-based representation of the diffusion marginals is assumed rather than
derived, and approximation errors in the learned Hamiltonian may lead to discrepancies
between the induced port--Hamiltonian flow and the true reverse-time diffusion.
Second, the analysis focuses on frozen-time Lyapunov and passivity properties and does not
provide guarantees on the transient evolution of probability distributions, nor on exact
likelihood recovery.
Third, the deterministic PH reverse dynamics correspond to a probability-flow
approximation and therefore neglect stochastic effects that may be beneficial for
exploration and mode coverage in high-dimensional sampling.
Finally, the current formulation assumes globally Lipschitz energy gradients and
unconstrained state spaces; extending the theory to constrained systems, non-smooth
energies, or adaptive noise structures remains an open direction for future work.

\end{document}